\def\I {\mathrm{i}}
\newcommand{\ket}[1]{\left| #1 \right>} 
\newcommand{\bra}[1]{\left< #1 \right|} 
\newcommand{\braket}[2]{\left\langle#1|#2\right\rangle}
\newcommand{\ketbra}[2]{|#1\rangle\!\langle#2|}
\newcommand{\proj}[1]{|#1\rangle\!\langle#1|}
\newcommand{\id}{\leavevmode\hbox{\small1\normalsize\kern-.33em1}}
\newcommand{\tr}{\mathrm{tr}}
\newcommand{\cnot}{{\sc C-not}}
\newcommand{\cnots}{\cnot s}
\newcommand{\cE}{\mathcal{E}}
\newcommand{\n}{N}
\newcommand{\m}{M}
\newcommand{\rgate}{\text{{\sc R}}\xspace}
\newcommand{\xx}{\text{{\sc XX}}\xspace}
\newcommand{\com}[1]{{\sf #1}}
\newcommand{\UQC}{\textit{UniversalQCompiler}}
\newcommand{\QI}{\textit{QI}}
\newcommand\restr[2]{{
  \left.\kern-\nulldelimiterspace 
  #1 
  \vphantom{\big|} 
  \right|_{#2} 
  }}
    \newwrite\bibnotes
    \def\bibnotesext{Notes.bib}
\write\bibnotes{@CONTROL{REVTEX41Control}}
\write\bibnotes{@CONTROL{%
    apsrev41Control,author="08",editor="1",pages="1",title="0",year="0"}}
\write\@auxout{\string\citation{apsrev41Control}}%
\definecolor{mylinkcolor}{rgb}{0,0,0.8} 
\newtheorem{thm}{Theorem}
\newtheorem{lem}[thm]{Lemma}
\newenvironment{proof}[1][Proof]{\noindent\textbf{#1.} }{\ \rule{0.5em}{0.5em}}
\begin{document}

\title{Introduction to \UQC{}}

\author{Raban Iten}
\email{itenr@ethz.ch} 
\affiliation{Institute for Theoretical Physics, ETH Z\"urich, 8093 Z\"urich, Switzerland}
\author{Oliver Reardon-Smith}
\affiliation{Department of Mathematics, University of York, YO10 5DD, UK}
\author{Emanuel Malvetti}
\affiliation{Department of Chemistry, TUM, Lichtenbergstra{\ss}e 4, 85747 Garching, Germany}
\author{Luca Mondada}
\affiliation{Institute for Theoretical Physics, ETH Z\"urich, 8093 Z\"urich, Switzerland}
\author{Gabrielle Pauvert}
\affiliation{Department of Mathematics, University of York, YO10 5DD, UK}
\author{Ethan Redmond}
\affiliation{Department of Mathematics, University of York, YO10 5DD, UK}
\author{Ravjot Singh Kohli}
\affiliation{Department of Mathematics, University of York, YO10 5DD, UK}
\author{Roger Colbeck}
\email{roger.colbeck@york.ac.uk}
\affiliation{Department of Mathematics, University of York, YO10 5DD, UK}

\date{$29^{\mathrm{th}}$ March 2021}

\begin{abstract}
  We introduce an open source software package \UQC{} written in
  Mathematica that allows the decomposition of arbitrary quantum
  operations into a sequence of single-qubit rotations (with arbitrary
  rotation angles) and controlled-NOT (\cnot{}) gates.  Together with
  the existing package \QI{}, this allows quantum information
  protocols to be analysed and then compiled to quantum circuits. Our
  decompositions are based on \href{\doibase 10.1103/PhysRevA.93.032318}{Phys.\ Rev.\ A {\bf 93}, 032318 (2016)},
  and hence, for generic operations, they are near optimal in terms of
  the number of gates required. \UQC{} allows the compilation of any
  isometry (in particular, it can be used for unitaries and state
  preparation), quantum channel, positive-operator valued measure
  (POVM) or quantum instrument, although the run time becomes
  prohibitive for large numbers of qubits. The resulting circuits can
  be displayed graphically within Mathematica or exported to
  \LaTeX. We also provide functionality to translate the circuits to
  OpenQASM, the quantum assembly language used, for instance, by the
  IBM Q Experience.
\end{abstract}

\maketitle

\section{Introduction} \label{sec:intro}
A universal quantum computer should be able to perform arbitrary
computations on a quantum system. It is common to break down a given
computation into a sequence of elementary gates, each of which can be
implemented with low cost on an experimental architecture. However,
given an abstract representation of the desired computation, such as a
unitary matrix, it is in general difficult and time consuming to find
a low-cost circuit implementing it. Here, we introduce an open source
Mathematica package,
\href{http://www-users.york.ac.uk/~rc973/UniversalQCompiler.html}{\UQC{}}\footnote{See
  our webpage for a reference to the github repository and the
  documentation:
  \url{http://www-users.york.ac.uk/~rc973/UniversalQCompiler.html}.}
that allows for automation of the compiling process on a small number
of qubits. The package requires an existing Mathematica package
\QI\footnote{\url{https://github.com/rogercolbeck/QI}}, which can
easily handle common computations in quantum information theory, such
as partial traces over various qubits or the Schmidt
decomposition. Since the code is provided for Mathematica, our
packages are well adapted for analytic calculations and can be used
alongside the library of mathematical tools provided by
Mathematica. Together, these constitute a powerful set of tools for
analysing protocols in quantum information theory and then compiling
the computations into circuits that can finally be run on a
experimental architecture, such as IBM Q Experience (see
Figure~\ref{fig:overview} for an overview). \UQC{} focuses on the
compilation process, and performs a few basic simplifications on the
resulting quantum circuit. Hence, one might want to put the gate sequences
obtained from \UQC{} into either a source-to-source compiler or a transpiler 
(see for example~\cite{maslov,maslov_cont_par,xzCalculus}) in order to optimize the gate count of 
the circuits further or to map them to a different hardware, which may have 
restrictions on the qubit-connectivity~\cite{qubit_mapping1,qubit_mapping2,qubit_mapping3,qubit_mapping4,qubit_mapping5}.

\begin{figure*}[!t] 
\includegraphics[width=1\textwidth]{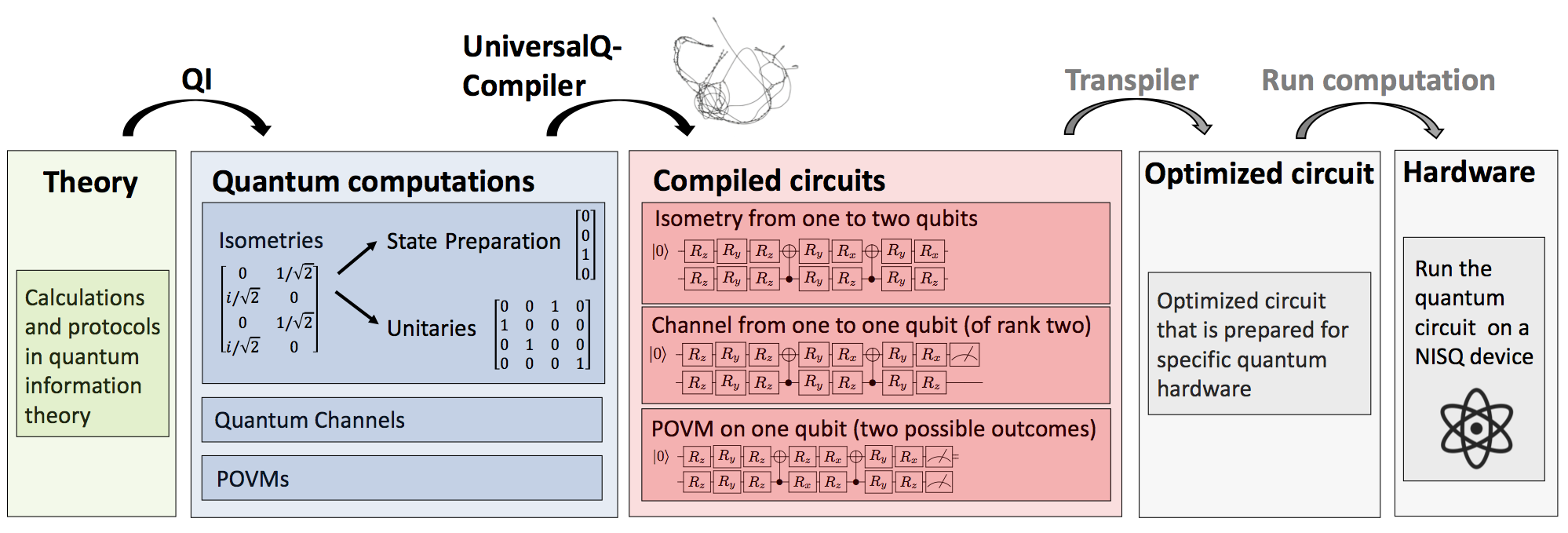}
\caption{Overview over the use of \UQC{}. The Mathematica package
  \QI{} can be used to do common computations in quantum information
  and manipulate quantum operations, such as unitary matrices. Given
  an abstract representation of a quantum computation, \UQC{} takes it
  as an input and outputs a quantum circuit implementing the
  computation. Thereby, we distinguish the following classes of
  operations: isometries (including unitaries and state preparation as
  special cases), quantum channels and POVMs. The picture depicts
  example-circuits for each class of operations on two qubits. The
  circuit could then be further optimized and prepared for a specific
  quantum hardware architecture by an (external) transpiler. To
  simplify interfacing with a transpiler or a circuit optimizer (for
  instance the \href{https://github.com/Quantomatic}{\textit{PyZX}}
  quantum circuit optimizer~\cite{xzCalculus}), we provide a python
  script (based on \textit{ProjectQ}~\cite{ProjectQ1,ProjectQ2}) to translate the Mathematica outputs to the quantum assembly
  language OpenQASM.}
  \label{fig:overview}
\end{figure*}

The package \UQC{} provides code for all the decompositions described
in~\cite{Iso}, which are near optimal in the required number of gates
for generic computations in the quantum circuit model (in fact, the
achieved \cnot{} counts differ by a constant factor of about two from
a theoretical lower bound given in~\cite{Iso}). Note that our
decompositions may not lead to optimal gate counts for computations of
a special form lying in a set of measure zero (see~\cite{Iso} for the
details), as for example for a unitary that corresponds to the circuit
performed for Shor's algorithm~\cite{shor}.  Hence, to optimize the
gate counts when decomposing operations of certain special forms, such
as diagonal gates, multi-controlled single-qubit gates and
uniformly-controlled gates, we provide separate commands. In addition,
we provide methods for analyzing, simplifying and manipulating gate
sequences.  Outputs are given in a bespoke gate list format, and can
be exported as graphics, or to \LaTeX{} using the format of
Q-circuit~\cite{qcirc}.

\UQC{} is intended to be an academic software library that
focuses on simplicity and adaptability of the code and it was not our
focus to optimize the (classical) run time of the decomposition
methods (the theoretical decompositions mainly focused on minimizing the 
\cnot{} count). A detailed documentation as well as an example notebook are
published together with our code and should help the user to get
started quickly. The aim of this paper is to give an overview over the
package \UQC{} and to provide some theoretical background about the
decomposition methods that it uses.  A separate manual is provided
with the package that provides more details.

We work with the universal gate library consisting of arbitrary
single-qubit rotations and \cnot{} gates (we also explain how to
convert gate sequences from this universal set to another that
comprises single-qubit rotations and M{\o}lmer-S{\o}rensen gates (see
Appendix~\ref{app:tranform_gate_library}), which are common on
experimental architectures with trapped ions). \UQC{} decomposes
different classes of quantum operations into sequences of these
elementary gates keeping the required number of gates as small as possible.

In Section~\ref{sec:UGL}, we define the elementary gates we are working with, i.e., the single-qubit rotations and the \cnot{} gate.

In Section~\ref{sec:isometries}, we describe how to use \UQC{}  to decompose arbitrary isometries from $m$ to $n\geq m$ qubits describing the most general evolution that a closed quantum system can undergo. Mathematically, an isometry from $m$ to $n$ qubits is an inner-product preserving transformation that maps from a Hilbert space of dimension $2^m$ to one of dimension $2^n$. Physically, such an isometry can be thought of as the introduction of $n-m$ ancilla qubits in a fixed state (conventionally $\ket{0}$) followed by a general $n$-qubit unitary on the $m$ input qubits and ancilla qubits. Unitaries and state preparation on $n$ qubits are two important special cases of isometries from $m$ to $n$ qubits, where $m=n$ and $m=0$, repsectively.

In Section~\ref{sec:channels}, we consider the decomposition of
quantum channels from $m$ to $n$ qubits (no longer restricting to
$m\leq n$). A quantum channel describes the most general evolution an
open quantum system (i.e., a quantum system that may interact with its
environment) can undergo. Mathematically, a quantum channel is a
completely positive trace-preserving map from the space of density
operators on $m$ qubits to the space of density operators on $n$
qubits. \UQC{} takes a mathematical description of such a quantum
channel (which can be supplied in Kraus representation or as a Choi
state) and returns a gate sequence that
implements the channel (in general after tracing out some qubits at the end of
the circuit). The decomposition is nearly optimal for generic channels
working in the quantum circuit model~\cite{Iso}. However, working in
more general models would allow further reducions in the number of
gates~\cite{channel}. We plan to implement code for the decompositions
described in~\cite{channel} in the future. For an overview of possible
applications of implementing channels,
see~\cite{channel_applications}.

In Section~\ref{sec:POVM}, we describe how to implement arbitrary
POVMs on $m$ qubits describing the most general measurements that can
be performed on a quantum system. Similarly to the case of channels,
working in generalized models can reduce the gate count
further~\cite{channel}, and we plan to implement these in a future
version. See also~\cite{Leo} for an application of \UQC{} for
synthesis of POVMs.

In Section~\ref{sec:inst}, we extend from POVMs to quantum instruments.  These can be thought of as the most general type of quantum measurement where we care about the post-measurement state (in contrast to a POVM where we only care about the distribution over the classical outcomes). Our decompositions for these are based on those used for channels, and again could be improved using additional methods from~\cite{channel}.

In Section~\ref{sec:simplifying}, we describe some simple rules that
can be used to simplify circuits and that are implemented within
\UQC{}.

Finally, in Section~\ref{sec:QASM}, we explain how to automatically
translate our circuits to the open quantum assembly language
(OpenQASM)~\cite{QASM}, which allows our package to interface with
other quantum software packages.

\begin{table*}[!t] 
\renewcommand{\arraystretch}{1.5}
\caption{Overview of the asymptotic number of \cnot{} gates and the classical run time required to decompose $m$ to
  $n$ isometries 
  using different decomposition schemes. Abbreviations used: $^a$Column-by-column
  decomposition of an isometry; $^b$Decomposition of an isometry using
  the Quantum Shannon Decomposition; $^c$State preparation.}
\centering
\begin{ruledtabular}
\begin{tabular}{lllll}
Method & \cnot{} count for a generic $m$ to $n$ isometry & Classical run time & References \\ \hline
CCD$^a$ &$2^{m+n}-\frac{1}{24}2^n+  \mathcal O\left(n^2 \right)2^m$&$\mathcal{O}(n2^{2m+n})$& \cite{Iso}\\
QSD$^b$ & $\frac{23}{144}\left(4^m+2\cdot4^n\right)+  \mathcal O\left(m\right)$&$\mathcal{O}(2^{3n})$& \cite{2},\cite{Iso}\\
Knill & $\frac{23}{24}(2^{m+n}+2^n)+\mathcal O\left(n^2 \right)2^m$ \text{ if $n$ is even} &$\mathcal{O}(2^{3n})$& \cite{Knill},\cite{Iso}\\
		   &$ \frac{115}{96}(2^{m+n}+2^n)+\mathcal O\left(n^2 \right)2^m$ \text{ if $n$ is odd}&$\mathcal{O}(2^{3n})$& \cite{Knill},\cite{Iso}\\
SP$^c$ & $\frac{23}{24}2^n$ [here $m=0$]&$\mathcal{O}(2^{\frac{3n}{2}})$& \cite{3},\cite{Iso}\\  [0.04cm]								   
\end{tabular}
\end{ruledtabular}
\label{tab:iso}
\end{table*}

\section{Universal gate library} \label{sec:UGL}

Our gate library consists of arbitrary single-qubit rotations and \cnot{} gates. This set of gates is known to be universal~\cite{5}, i.e., any quantum computation can be decomposed into a sequence of gates in this set. We use the following convention for rotation gates 

\begin{align}
R_x(\theta)&=\begin{pmatrix}
\cos(\theta/2) & i\sin(\theta/2)\\
i\sin(\theta/2) & \cos(\theta/2) 
\end{pmatrix},\, \\
 R_y(\theta)&=\begin{pmatrix}
                                \cos(\theta/2) & \sin(\theta/2)\\
                                -\sin(\theta/2) & \cos(\theta/2) 
                            \end{pmatrix}, \, \\
                            R_z(\theta)&=\begin{pmatrix}
                                e^{i \theta/2} & 0\\
                                0 & e^{-i \theta/2} 
                            \end{pmatrix}. \
                            \end{align}
                             Note that in~\cite{Iso}, we used the convention $R_x'(\theta)=R_x(-\theta)$, $R_y'(\theta)=R_y(-\theta)$ and $R_z'(\theta)=R_z(-\theta)$.
                            In addition, we use the following two-qubit gate
  \begin{align}                          
  \textnormal{\cnot}=\begin{pmatrix}
1 & 0& 0& 0\\
 0& 1& 0& 0\\
 0& 0& 0& 1\\
 0& 0& 1& 0
\end{pmatrix} \, .
\end{align}

In Appendix~\ref{app:tranform_gate_library}, we explain how to convert
gate sequences from this universal set to the one that comprises
single-qubit rotations and M{\o}lmer-S{\o}rensen gates without
increasing the number of two-qubit gates.

Note that, when displaying circuits, we use
$\Qcircuit @C=0.2em @R=.2em {&\qw&\qw&\meter&\cw&\cw}$ to represent
measurement in the computational basis where the classical outcome is
retained, and $\Qcircuit @C=0.2em @R=.2em {&\qw\qw&\meter}$ to
represent the qubit being traced out (equivalent to measuring and
forgetting the outcome).

\section{Compilation of isometries}\label{sec:isometries}

\UQC{} provides three different decompositions for generic isometries from $m$ to $n$ qubits given as a $2^n \times 2^m$ complex matrix $V$ satisfying $V^\dagger V=\id$ (and an additional decomposition that only works for $m=0$, i.e., for state preparation). For an overview of the gate counts and the running time complexity of the different methods, see Table~\ref{tab:iso}. For a comparison with a theoretical lower bound on the number of \cnot{} gates, see Table~I in~\cite{Iso}. In the following, first we give some information about the different decomposition schemes. Then we explain the methods  \com{DecIsometry} and \com{DecIsometryGeneric}, which choose the optimal decomposition scheme automatically. An example of
an output circuit created by  \com{DecIsometry} is given in Figure~\ref{fig:iso}.\\

\begin{figure*}[!t]
\[
\Qcircuit @C=0.2em @R=.2em {
\lstick{\left| 0 \right>}& \qw  & \gate{R_z} & \gate{R_y} & \gate{R_z} & \targ & \gate{R_y} & \gate{R_x} & \ctrl{1} & \qw  & \qw  & \qw  & \qw  & \qw  & \qw  & \qw  & \qw  & \ctrl{2} & \gate{R_y} & \qw  & \qw  & \qw  & \qw  & \ctrl{2} & \gate{R_y} & \gate{R_z} & \qw  & \qw  & \qw  & \qw  & \qw  & \qw  & \qw  \\
\lstick{\left| 0 \right>}& \qw  & \gate{R_z} & \gate{R_y} & \gate{R_z} & \qw  & \qw  & \qw  & \targ & \gate{R_y} & \gate{R_x} & \targ & \gate{R_y} & \gate{R_x} & \ctrl{1} & \qw  & \qw  & \qw  & \qw  & \qw  & \ctrl{1} & \gate{R_y} & \gate{R_z} & \qw  & \qw  & \qw  & \targ & \gate{R_y} & \gate{R_x} & \targ & \gate{R_y} & \gate{R_x} & \qw  \\
& \qw  & \gate{R_z} & \qw  & \qw  & \ctrl{-2} & \qw  & \qw  & \qw  & \qw  & \qw  & \ctrl{-1} & \gate{R_y} & \gate{R_z} & \targ & \gate{R_y} & \gate{R_x} & \targ & \gate{R_y} & \gate{R_x} & \targ & \gate{R_y} & \gate{R_x} & \targ & \gate{R_y} & \gate{R_x} & \ctrl{-1} & \gate{R_y} & \gate{R_z} & \ctrl{-1} & \gate{R_y} & \gate{R_z} & \qw  
}\]
\caption{Circuit for a randomly chosen isometry from one to three qubits. The
  two ancilla qubits are always initialized in the state $\ket{0}$,
  where an arbitrary state $\ket{\psi}$ is provided as an input on the
  least significant qubit. The output of the computation is read out
  from all three qubits at the end of the circuit. The circuit was
  produced by running
  $st$=\com{DecIsometry}$[$\com{PickRandomIsometry}$[2, 8]]$ in
  Mathematica and then calling \com{LatexQCircuit}$[st]$ to export the
  circuit to \LaTeX.  To save space we do not depict the angles here,
  but these are found by our code and can be output if desired.}
\label{fig:iso}
\end{figure*}
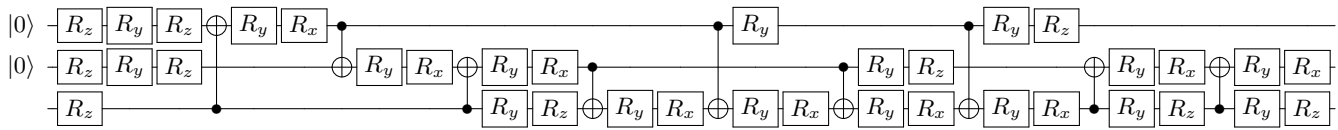

\subsection{Column-by-column decomposition} 

The column-by-column decomposition (method: \com{ColumnByColumnDec}) was introduced in~\cite{Iso} and achieves near optimal \cnot{} counts for generic isometries from $m$ to $n$ qubits. As the name suggests, the isometry is decomposed in a column-wise fashion (see~\cite{Iso} for the details). This decomposition achieves the lowest known \cnot{} counts for generic isometries with $1\leq m \leq n-2$. For isometries of a special form, it may also achieve lower \cnot{} counts for $m=0,n-1,n$ (after running the simplifications described in Section~\ref{sec:simplifying} and removing gates that implement the identity during the decomposition). In particular, it usually performs well for isometries with many zeros, since the number of gates to decompose the columns is reduced in such cases.

The column-by-column decomposition requires $2^{m+n}$ \cnots{} to
leading order for the decomposition of an isometry from $m$ to $n$
qubits. Its classical time complexity is  $\mathcal{O}(n2^{2m+n})$
(see Appendix~\ref{app:run_time}), which scales significantly better
in $m$ than the other decomposition methods for isometries. Note also
that it is straightforward to parallelize parts of the
column-by-column decomposition, which may help to lower the run time
significantly for practical implementations (but we have not done so
in the version 0.1 of our package).

\subsection{Quantum Shannon Decomposition (QSD)}

The Quantum Shannon Decomposition (method: \com{QSD}) was introduced
for unitaries in~\cite{2} and adapted to isometries in~\cite{Iso}. It
achieves lower \cnot{} counts than the column-by-column decomposition
for generic isometries from $m$ to $n$ qubits with $m=n-1$ or $m=n$,
whereas the QSD is not well adapted to the case $m \ll n$.  Its
classical time complexity is independent of $m$ and given by $\mathcal{O}(2^{3n})$ (see Appendix~\ref{app:classical_comp_QSD}).

\subsection{Knill's decomposition}

Knill's decomposition scheme (method: \com{KnillDec})
described in~\cite{Iso} and based on~\cite{Knill,3} expands an
isometry $V$ to a unitary $U$ maximizing the number of eigenvalues of
$U$ that are equal to one. The unitary $U$ can then be decomposed into
a circuit (described in~\cite{Knill,Iso}) that requires
$c \cdot (k+1) 2^{n} +k \mathcal{O}(n^2)$ \cnot{} gates for a unitary
on $n$ qubits with $k$ eigenvalues that are not equal to one, where
$c=23/24$ if $n$ is even and $c=115/96$ if $n$ is odd. For a generic
isometry $V$ from $m$ to $n$ qubits, the unitary extension, $U$, can
be chosen to have $2^m$ eigenvalues that are not equal to one and
hence requires $c \cdot (2^{n+m} + 2^{n}) +\mathcal{O}(2^{m+n/2})$
\cnot{} gates to leading order. However, for isometries of a special
form for which the unitary extensions has fewer than $2^m$ eigenvalues
that are not equal to one, Knill's decomposition may achieve lower
\cnot{} counts than the others (for an example, see the notebook
{\tt Examples.nb} that is provided together with the package).  The
classical time complexity of the decomposition is independent of
$m$ to leading order and given by $\mathcal{O}(2^{3n})$ (see
Appendix~\ref{app:classical_comp_Knill}).

\begin{figure*}[!t]
\[
\Qcircuit @C=0.2em @R=.2em {
\lstick{\left| 0 \right>}& \qw  & \gate{R_y\left(\textnormal{$\pi -2 \tan ^{-1}\left(\sqrt{2-\sqrt{3}}\right)$}\right)} & \gate{R_z(\textnormal{$\pi$})} & \targ & \gate{R_y\left(\textnormal{$\pi -2 \tan ^{-1}\left(\sqrt{2-\sqrt{3}}\right)$}\right)} & \qw  & \ctrl{1} & \meter &   &   \\
& \qw  & \gate{R_z(\textnormal{$\frac{3 \pi }{2}$})} & \qw  & \ctrl{-1} & \gate{R_y\left(\textnormal{$\frac{3 \pi }{2}$}\right)} & \gate{R_z(\textnormal{$\frac{3 \pi }{2}$})} & \targ & \gate{R_y(\textnormal{$\pi$})} & \gate{R_x(\textnormal{$\frac{3 \pi }{2}$})} & \qw  
}\]
\caption{Circuit for the amplitude damping channel given by Kraus
  operators $K_0(\gamma)=\{\{1, 0\}, \{0, \sqrt{1-\gamma}\}\}$ and
  $K_1(\gamma)=\{\{0, \sqrt{\gamma}\}, \{0, 0\}\}$ for
  $\gamma=1/3$. The ancilla qubit is always initialized in the state
  $\ket{0}$, where an arbitrary state $\ket{\psi}$ is provided as an
  input on the second qubit. The output of the computation is read out
  from the second qubit after tracing out  the first one (i.e., after
  measuring the first qubit and forgetting about the classical
  outcome). The circuit was produced by running
  $st$=\com{DecChannelInQCM}$[\{K_0(1/3),K_1(1/3)\}]$ in Mathematica
  and then \com{LatexQCircuit}$[st,AnglePrecision \to 1]$ to export
  the circuit to \LaTeX.}
\label{fig:channel}
\end{figure*}
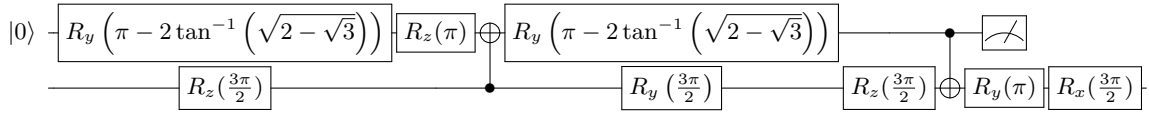

\subsection{Householder decomposition}\label{sec:house}
We also include a method for decomposing isometries using Householder reflections~\cite{house} (method: \com{DenseHouseholderDec}). A Householder reflection with respect to $\ket{v}$ is a unitary of the form $\id-2\proj{v}$. Generalizations of these can be used to construct unitaries that map any computational basis state to a particular state. Such unitaries can then be applied in sequence to construct an isometry, each mapping one basis state to the corresponding column.  Householder reflections are particularly useful for this because their successive action does not mess up previously created columns.  For a generic
isometry from $m$ to $n$ qubits, the number of \cnots{} required for this decomposition scales as $c \cdot (2^{n+m} + 2^{n}) +\mathcal{O}(2^{m+n/2})$, where $c=23/24$ if $n$ is even and $c=115/96$ if $n$ is odd~\cite{house}.  Note that this scaling is identical to that for Knill's decomposition; the advantage of using the Householder decomposition over Knill's is only for small $m$ and $n$. Our implementation of the Householder decomposition uses one ancilla qubit (which can start in any state and is returned to its initial state after the computation). The classical complexity of this decomposition is $\mathcal{O}(2^{m+n}(2^m+2^{n/2}))$~\cite{house}.

\subsection{State preparation} \label{sec:SP}

For the special case of state preparation on $n$ qubits (i.e., for an
isometry from $0$ to $n$ qubits), the best known decomposition scheme
is based on the Schmidt decomposition of the quantum state~\cite{3}
(method: \com{StatePreparation}). The scheme was slightly improved for
state preparation for an odd number of qubits in~\cite{Iso} leading to
a \cnot{} count of $23/24 \cdot 2^n$ for state preparation on $n$
qubits. This is lower than the number of \cnots{} required to
decompose an $n$-qubit state with uniformly controlled
gates~\cite{10}, which is $2^n$ to leading order\footnote{By default,
  the decomposition based on uniformly controlled gates~\cite{10} is
  used for the decomposition of the first column in the method
  \com{ColumnByColumnDec}. The option \com{FirstColumn} $\to$
  \com{StatePreparation} allows use of the scheme based on the Schmidt decomposition~\cite{3} for its decomposition.}. However, the classical time complexity is $\mathcal{O}(2^{\frac{3n}{2}})$ (see Appendix~\ref{app:classical_comp_SP} for the details), which is worse than the complexity  $\mathcal{O}(n2^{n})$  for state preparation using uniformly controlled gates.

\medskip

\noindent{\bf Remark} (States with low Schmidt rank).  The Schmidt
rank of a bipartite quantum state $\psi_{AB}$ is given by the minimal
number of Schmidt coefficients required for its Schmidt
decomposition. States with a small Schmidt rank correspond to weakly
entangled quantum systems, and occur naturally in the study of the
grounds states of certain types of Hamiltonian (see,
e.g.,~\cite{Hastings}). In the future, we plan to adapt the
decomposition for state preparation introduced in~\cite{3} to states
with low Schmidt ranks (where the splitting of the sub-systems has to
be specified). We expect this adaptation to lower the \cnot{} count
significantly for such states. Moreover, in the related task of
approximate state preparation, one could lower the gate count by
setting small Schmidt coefficients equal to zero.

\subsection{Sparse isometries}\label{sec:sparse}
We also have methods for implementing the decompositions of~\cite{house} that are designed for sparse isometries. These work using Householder reflections and can be seen as an adaptation of the method of Section~\ref{sec:house} that takes advantage of sparse structure. For sparse states of $n$ qubits with $\mathrm{nnz}$ non-zero entries, 
the decompositions result in $\mathcal{O}(n\cdot\mathrm{nnz})$ \cnots{} with a classical runtime of $\mathcal{O}({n \choose s}+n 2^{2s})$ with $s \in \mathbb{N}$ such that $\mathrm{nnz} \leq 2^s$  (method: \com{SparseStatePreparation}). The method is described in Section~2 and Section 5.2 in~\cite{house} and uses one ancilla.
For sparse isometries from $m\geq1$ to $n$ qubits (method: \com{SparseHouseholderDec}) it is more difficult to be precise about the number of required \cnots{}, since this number is affected by the structure of the non-zero elements in the isometry and not just the number of them.  The decomposition is based on Algorithm~3 in~\cite{house} (with minor modifications) and requires one ancilla.

\subsection{Choosing the optimal decomposition}
The method \com{DecIsometry}$[V]$ decomposes the isometry $V$ into a
sequence of single-qubit rotations and \cnot{} gates by running all
four decompositions (and in the case $m=0$ also the one for state
preparation) and in addition the decomposition for sparse $V$ if the number of zeros in $V$ is larger than $2^{m+\frac{n}{3}}$ , simplifying the gate sequences using the methods
described in Section~\ref{sec:simplifying}, and choosing the output
gate sequence that achieves the lowest \cnot{} count.  To decompose a
random isometry $V$ (of high dimensions), we recommend using
\com{DecIsometryGeneric}$[V]$, which chooses the decomposition method
that achieves the lowest \cnot{} count for a generic isometry with the
same dimensions as $V$ and hence has a shorter running time compared to \com{DecIsometry}$[V]$ (since it runs only one decomposition). \\

As sub-routines, we use optimal decompositions of two-qubit gates~\cite{optimal2qubit}  (method: \com{DecUnitary2Qubits}) and optimal state preparation on three qubits~\cite{OptimalSP} (method: \com{StatePrep3Qubits}).

\section{Compilation of quantum channels} \label{sec:channels}

In the following, we consider the implementation of quantum channels in the quantum circuit model (method: \com{DecChannelInQCM}). For the decomposition of channels, it is most convenient to work with the Kraus representation of the channel. Every quantum channel $\cE$ from $m$
to $n$ qubits with Kraus rank $K$ can be represented by Kraus operators $A_i$, which are complex matrices
 of dimension $2^n \times 2^m$ such that $\sum_{i=1}^{K} A_i^{\dagger}A_i=\id$ and
$\cE(\rho)=\sum_{i=1}^K A_i \rho A_i^{\dagger}$ for all
density operators $\rho$ of dimension $2^m$~\cite{choi}. 
To change the Kraus representation to a Choi state~\cite{choi} or vice versa, we provide the methods \com{KrausToChoi} and \com{ChoiToKraus}, respectively.

An arbitrary channel from $m$ to $n$ qubits (given as a list of Kraus
operators) can be provided as an input to \com{DecChannelInQCM}, which
returns a gate sequence (with some trace-out operations at the end) that implements the channel (see Figure~\ref{fig:channel} for an example).
The decomposition uses Stinespring's theorem~\cite{Stinespring} stating that a channel of Kraus rank $K=2^k$ can be represented by an isometry from $m$ to $n+k$ qubits. Then, the channel can be implemented by using one of the decomposition schemes for isometries and tracing out the ancillas at the end of the circuit. The \cnot{} count for a channel from $m$ to $n$ qubits of Kraus rank $2^k$ is therefore $2^{m+n+k}$ to leading order. This is nearly optimal for the decomposition of generic channels in the quantum circuit model~\cite{Iso}.

Note also that all channels from $m$ to $n$ qubits have a Kraus
representation with at most $2^{m+n}$ elements.  The command
\com{MinimizeKrausRank} is provided to do the reduction to the minimal
number of Kraus operators (which may be lower than $2^{m+n}$ for
channels of a special form).

\section{Compilation of POVM\lowercase{s}} \label{sec:POVM}

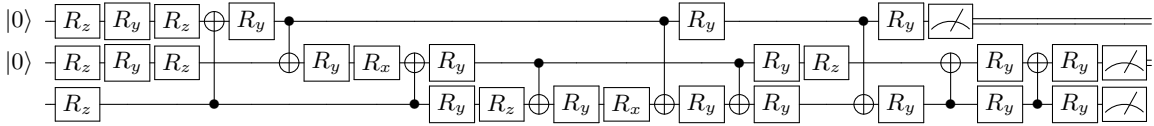
\begin{figure*}[!t]
\[
\Qcircuit @C=0.2em @R=.2em {
\lstick{\left| 0 \right>}& \qw  & \gate{R_z} & \gate{R_y} & \gate{R_z} & \targ & \gate{R_y} & \ctrl{1} & \qw  & \qw  & \qw  & \qw  & \qw  & \qw  & \qw  & \qw  & \ctrl{2} & \gate{R_y} & \qw  & \qw  & \qw  & \ctrl{2} & \gate{R_y} & \meter & \cw  & \cw  & \cw  & \cw  & \cw  \\
\lstick{\left| 0 \right>}& \qw  & \gate{R_z} & \gate{R_y} & \gate{R_z} & \qw  & \qw  & \targ & \gate{R_y} & \gate{R_x} & \targ & \gate{R_y} & \qw  & \ctrl{1} & \qw  & \qw  & \qw  & \qw  & \ctrl{1} & \gate{R_y} & \gate{R_z} & \qw  & \qw  & \targ & \gate{R_y} & \targ & \gate{R_y} & \meter & \cw  \\
& \qw  & \gate{R_z} & \qw  & \qw  & \ctrl{-2} & \qw  & \qw  & \qw  & \qw  & \ctrl{-1} & \gate{R_y} & \gate{R_z} & \targ & \gate{R_y} & \gate{R_x} & \targ & \gate{R_y} & \targ & \gate{R_y} & \qw  & \targ & \gate{R_y} & \ctrl{-1} & \gate{R_y} & \ctrl{-1} & \gate{R_y} & \meter &   
}\]
\caption{Circuit for a POVM on one qubit implementing the pretty good
  measurement for distinguishing the states $\phi_1=\ketbra{0}{0}$,
  $\phi_2=\ketbra{1}{1}$ and $\phi_3=\ketbra{+}{+}$ (the rotation
  angles are not depicted for simplicity). The POVM elements are given
  by $M_1= 1/4 \cdot \{\{1, 1\}, \{1, 1\}\}$, $M_2=1/8 \cdot \{\{3+2
  \sqrt{2},-1\},\{-1,3-2 \sqrt{2}\}\}$ and $M_3=1/8 \cdot \{\{3 - 2
  \sqrt{2}, -1\}, \{-1, 3 + 2  \sqrt{2}\}\}$. The probability to
  measure $\ket{i}$ (with $i \in \{ 1,2,3\}$) on the two ancilla
  qubits for a given state $\rho$ on the third qubit is given by $\tr(M_i \rho)$. The circuit was produced by running
  $st$=\com{DecPOVMInQCM}$[\{M_1,M_2,M_3\}]$ in Mathematica and then
  \com{LatexQCircuit}$[st]$ to export the circuit to \LaTeX. Again, the rotation angles are not depicted for simplicity.}
\label{fig:POVM}
\end{figure*}

Positive-operator valued measures (POVMs) describe the most general
measurements that can be performed on quantum systems. In the
following, we consider the implementation of POVMs on $m$ qubits in
the quantum circuit model (method: \com{DecPOVMInQCM}). Every POVM
$\mathcal{M}$ on $m$ qubits with $L$ possible measurement outcomes can
be represented by $L$ operators $0\leq E_{i}\leq \id$ satisfying
$\sum_{i=0}^{L-1} E_i =\id$. The probability of getting the
$i^{\text{th}}$ outcome when performing the POVM on an $m$ qubit state
$\rho$ is then given by $\tr[E_i\rho]$.

To demonstrate the use of \com{DecPOVMInQCM}, we consider state
discrimination (see for example~\cite{BarnettCroke} for a review). Suppose a state is chosen from a known set of (not necessarily
pure) density operators $\{\phi_i\}_i$, where
$\phi_i$ is chosen with probability
$p_i$.  The goal is to correctly guess which state was chosen, by
performing a measurement on the given state. In general, the optimal
strategy for such a task is to perform a (non-projective) POVM. Since
it is (in general) difficult to find the optimal POVM, a ``pretty
good'' choice was introduced in~\cite{PGM1,PGM2}.  Using
\com{DecPOVMInQCM} we can find a quantum circuit to implement these
POVM elements. Running this circuit on some quantum hardware would
then give us a (classical) output that corresponds to a pretty good
guess of which state was given to us. As a concrete example, we take 
$\phi_1=\ketbra{0}{0}$,
$\phi_2=\ketbra{1}{1}$ and
$\phi_3=\ketbra{+}{+}$, where
$\ket{+}:=1/\sqrt{2}\,(\ket{0}+\ket{1})$, assuming that each state is
chosen with the same probability
$p_i=1/3$. The pretty good measurement has POVM elements $M_i=p_i
\phi^{-1/2} \phi_i \phi^{-1/2}$, where $\phi=\sum_i p_i
\phi_i$. Using \com{DecPOVMInQCM} gives a quantum circuit with two
ancilla qubits (see Figure~\ref{fig:POVM}). The outcomes can be
interpreted as follows:
$(x,y)=(0,0)$ corresponds to a guess that the chosen state was
$\phi_1$, $(x,y)=(0,1)$ corresponds to a guess of
$\phi_2$ and $(x,y)=(1,0)$ corresponds to a guess of
$\phi_3$. Note that $(x,y)=(1,1)$ has probability zero.

\section{Compilation of instruments}\label{sec:inst}
Instruments are generalizations of both channels and POVMs. Again we
consider an implementation in the quantum circuit model (method:
\com{DecInstrumentInQCM}). To specify an instrument, we give the Kraus
operators of the subnormalized channels comprising the instrument.
Consider an instrument from $m$ to $n$ qubits with $L$ outcomes. This
can be specified using Kraus operators of dimension $2^n\times 2^m$.
If the $j^{\text{th}}$ subnormalized channel has Kraus representation
$\{A^j_i\}_{i=1}^{K_j}$, then to be a valid instrument requires
$\sum_{j=1}^L\sum_{i=1}^{K_j}(A^j_i)^\dagger A^j_i=\id$.  Starting
from state $\rho$, the outcome $j$ occurs with probability
$\tr(\cE^j(\rho))$ and the post-measurement state is
$\cE^j(\rho)/\tr(\cE^j(\rho))$, where
$\cE^j(\rho)=\sum_{i=1}^{K_j}A^j_i\rho(A^j_i)^\dagger$. In general,
the circuit output by \com{DecInstrumentInQCM} will involve some final
measurements and trace out operations, as well as some output qubits.

A circuit for an instrument can be converted to one for the
corresponding POVM by tracing out all of the output qubits, although
the circuit formed by this procedure may be longer than that formed by
first calculating the POVM and decomposing it using
\com{DecPOVMinQCM}.  Furthermore, if one takes the circuit output by
\com{DecPOVMinQCM} for the POVM $\{M_1,M_2,\ldots\}$ and removes all
trace out operations at the end, then one gets a circuit for the
instrument $\{\{\sqrt{M_1}\},\{\sqrt{M_2}\},\ldots\}$, i.e., where
each subnormalized channel has just one Kraus operator equal to the
square root of the corresponding POVM element.

\section{Simplifications of gate sequences} \label{sec:simplifying}

Given a sequence of single-qubit rotations and \cnot{} gates, it may
be possible to find a shorter gate sequence that implements the same
operation. We provide the method \com{SimplifyGateList}, which uses
some simple rules to reduce the gate count. The number of single-qubit
gates is reduced by merging single-qubit rotations in cases where more
than two occur consecutively on the same qubit. The merged single-qubit unitary can then be decomposed using the following well-known Lemma~\cite{Buch}.

\begin{lem}[ZYZ decomposition]\label{ZYZ} 
  For every unitary operation $U$ acting on a single qubit, there
  exist real numbers $\alpha,\beta,\gamma$ and $ \delta$ such that
\begin{equation} \label{eq7}
	U=e^{\I\alpha}R_{z}(\beta)R_{y}(\gamma)R_{z}(\delta).
\end{equation}
\end{lem}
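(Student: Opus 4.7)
The plan is to exhibit the parameters $\alpha,\beta,\gamma,\delta$ explicitly by matching the four entries of $U$ against those of the product $R_z(\beta)R_y(\gamma)R_z(\delta)$, after first stripping off a global phase so that the remaining matrix lies in $SU(2)$.

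First I would fix the global phase. Since $U \in U(2)$, its determinant satisfies $|\det U|=1$, so there exists $\alpha \in \mathbb{R}$ with $\det U = e^{2\I\alpha}$. Setting $V := e^{-\I\alpha}U$ gives $V \in SU(2)$, so $V$ has the form
\begin{equation*}
V = \begin{pmatrix} a & b \\ -\bar b & \bar a \end{pmatrix}, \qquad |a|^2+|b|^2=1.
\end{equation*}
Next, using the conventions of Section~\ref{sec:UGL}, I would compute the product on the right-hand side:
\begin{equation*}
R_z(\beta)R_y(\gamma)R_z(\delta) = \begin{pmatrix} e^{\I(\beta+\delta)/2}\cos(\gamma/2) & e^{\I(\beta-\delta)/2}\sin(\gamma/2) \\ -e^{-\I(\beta-\delta)/2}\sin(\gamma/2) & e^{-\I(\beta+\delta)/2}\cos(\gamma/2) \end{pmatrix}.
\end{equation*}
Note that this matrix is also in $SU(2)$ (its determinant is $1$), so it has exactly the same structural form as $V$.

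The remaining step is to solve for the three angles. Writing $a=|a|e^{\I\phi_a}$ and $b=|b|e^{\I\phi_b}$, I would choose $\gamma \in [0,\pi]$ by $\cos(\gamma/2)=|a|$ and $\sin(\gamma/2)=|b|$, which is consistent because $|a|^2+|b|^2=1$. Then matching phases of the $(0,0)$ and $(0,1)$ entries yields the linear system $\beta+\delta = 2\phi_a$ and $\beta-\delta = 2\phi_b$, whose unique solution is $\beta = \phi_a+\phi_b$ and $\delta = \phi_a-\phi_b$. Consistency with the bottom row is automatic since both matrices lie in $SU(2)$.

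The only mild obstacle is the degenerate case where $|a|=0$ or $|b|=0$, in which one of the phases $\phi_a,\phi_b$ is undefined. I would handle this by observing that in that regime only the sum (or only the difference) of $\beta$ and $\delta$ is constrained by the corresponding nonzero matrix entry, so any choice consistent with $\gamma\in\{0,\pi\}$ works; for instance, at $\gamma=0$ we can set $\delta=0$ and take $\beta = 2\phi_a$. This exhausts all cases and proves the existence of the required real numbers.
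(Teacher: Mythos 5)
Your proof is correct: the computation of $R_z(\beta)R_y(\gamma)R_z(\delta)$ matches the paper's conventions for $R_y$ and $R_z$, and the phase-stripping to $SU(2)$ plus entry matching (with the degenerate cases $|a|=0$ or $|b|=0$ handled) establishes existence of $\alpha,\beta,\gamma,\delta$. The paper itself gives no proof, citing this as a well-known lemma from Nielsen and Chuang, and your argument is essentially the standard explicit-parametrization proof found there, so there is nothing to reconcile.
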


By symmetry, Lemma~\ref{ZYZ} holds for any two orthogonal rotation
axes. We decompose the merged unitary using the ZYZ decomposition if the previous \cnot{} gate controls on the considered qubit, and we decompose it using the XYX decomposition otherwise. Since $R_z$ gates commute with the control of \cnot{} gates and $R_x$ gates with the target, one of the rotation gates can be commuted to the left of the \cnot{}, as summarized in the following circuit equivalence.

\[
\Qcircuit @C=0.8em @R=.46em {
 & \ctrl{2} &\gate{R_z}&\gate{R_y}&\gate{R_z}&\qw&&& &\gate{R_z}& \ctrl{2} &\gate{R_y}&\gate{R_z}&\qw   \\
& &&&&&&	 =&& \\
 &\targ &\gate{R_x}&\gate{R_y}&\gate{R_x}&\qw& &&  &\gate{R_x}&\targ &\gate{R_y}&\gate{R_x}&\qw	 \\
}
\]

We do this procedure starting from the end of the circuit and we also
cancel \cnot{} gates where we have two in a row (or with commuting
\cnot{} gates in-between them) with the same control and target. The
resulting circuit contains at most four single-qubit rotations after
each \cnot{} gate. Note that to do the simplifications we have to
traverse only once through the circuit, hence the classical run time of
this procedure is linear in the number of gates of the circuit.

For example, the following circuit (for arbitrary rotation angles)

\[
\Qcircuit @C=0.8em @R=.46em {
& \qw  & \ctrl{1} & \targ & \gate{R_z} & \ctrl{1} & \gate{R_z} & \qw  & \ctrl{1} & \qw  & \qw  \\
& \qw  & \targ & \ctrl{-1} & \gate{R_x} & \targ & \targ & \gate{R_x} & \targ & \qw  & \qw  \\
& \qw  & \gate{R_z} & \qw  & \qw  & \qw  & \ctrl{-1} & \gate{R_y} & \gate{R_x} & \gate{R_z} & \qw  
}
\]

gets simplified to the following (by only traversing the circuit once).

\[
\Qcircuit @C=0.8em @R=.46em {
& \qw  & \ctrl{1} & \targ & \gate{R_z} & \qw  & \qw  & \qw  & \qw  \\
& \qw  & \targ & \ctrl{-1} & \gate{R_x} & \targ & \qw  & \qw  & \qw  \\
& \qw  & \gate{R_z} & \qw  & \qw  & \ctrl{-1} & \gate{R_y} & \gate{R_z} & \qw  
}
\]

\section{Exporting circuits to {{O\lowercase{pen}QASM}}} \label{sec:QASM}

We also provide a python script to translate the gate sequences produced by Mathematica to the QASM language~\cite{QASM}. The gate sequence can then be imported into for example the IBM library \textit{Qiskit} and further optimized or also directly sent to quantum hardware for evaluation. The script is based on \textit{ProjectQ}~\cite{ProjectQ1,ProjectQ2} and is simple to use (see our documentation for more details).

\section{Future work}
 In a future version we plan to implement code for the decompositions
of quantum channels and POVMs in more general models than the quantum
circuit model that allow for measurements in between the gate sequence
and to classical control on the measurement
results~\cite{channel}. This will significantly reduce the \cnot{}
count for channels and POVMs. 


A remaining open question is how to use these decomposition as a
starting point for circuit optimization. As a straightforward
application, one could do peephole optimization, by taking a large
circuit and extracting parts of it that act on, e.g., three qubits,
and resynthesize the unitary corresponding to the circuit. If this
leads to a shorter circuit, this could then replace the original.
Alternatively, one could build up sets of increasingly complicated
templates~\cite{maslov}, i.e., circuits that implement the identity
operator, using our universal decomposition schemes. Indeed, choosing
a (parametrized) circuit, writing it as a unitary and synthesizing a
new circuit for it, directly leads to a (parametrized) template. These
templates could then be used to simplify parts of larger circuits.

\section{Acknowledgements}
We thank Thomas H\"aner, Dmitri Maslov and Joseph
M.\ Renes for helpful discussions.  We are grateful to the Department
of Mathematics, University of York for part-funding some summer
projects that enabled this work.  RI acknowledges support from the
Swiss National Science Foundation through SNSF project No.\
200020-165843 and through the National Centre of Competence in
Research \textit{Quantum Science and Technology} (QSIT).

\appendix

\section{Transforming our gate library to one that is well adapted for trapped ions} \label{app:tranform_gate_library}
The common universal gate set used for trapped ions consists of single-qubit gates
$\rgate(\theta, \phi)$ and the M{\o}lmer-S{\o}rensen gate $\xx(\phi)$ (see for example~\cite{home}) defined as follows.
\begin{align*}
    \rgate(\theta, \phi) = \begin{bmatrix}
        \cos(\theta/2)& -i \exp(-i \phi) \sin(\theta/2)\\
        -i\exp(i \phi) \sin(\theta/2) & \cos(\theta/2)\\
    \end{bmatrix},\\
    \xx(\phi) = \begin{bmatrix}
        \cos(\phi)&0&0&-i\sin(\phi)\\
        0&\cos(\phi)&-i\sin(\phi)&0\\
        0&-i\sin(\phi)&\cos(\phi)&0\\
        -i\sin(\phi)&0&0&\cos(\phi)\\
    \end{bmatrix}.\\
\end{align*}
In particular, we have $R_x(\theta)=\rgate(-\theta,0)$ and $R_y(\theta)=\rgate(-\theta,\pi/2)$.
Having a circuit consisting of \cnot{} gates, one can use the
following identity to replace each \cnot{} gates with a single  \xx
gate and single qubit gates~\cite{Maslov_trapped_ions}. Note that this transformation does not increase the two-qubit gate count.
\[
\Qcircuit @C=0.8em @R=.46em {
    & \ctrl{2} &\qw&&&\gate{R_y(-\frac\pi2)}&\multigate{2}{\text{XX}(\frac\pi4)}&\gate{R_x(\frac\pi2)}&
    \gate{R_y(\frac\pi2)}&\qw\\
    &&&=&&& \\
    &\targ &\qw&&&\qw &\ghost{\text{XX}(\frac\pi4)}&\gate{R_x(\frac\pi2)}&\qw&\qw \\
}
\]
In the following, we show how to merge the single-qubit gates in a circuit containing  \xx and single-qubit rotations, such that the resulting circuit  contains at most one $\rgate(\theta, \phi)$ gate after each \xx gate, and additionally a possible $R_x$ gate on each of the qubits at the beginning of the circuit. To do so, we use  that the \xx gate commutes with the $R_x$ gate (on both qubits)~\cite{Maslov_trapped_ions} together with the following decomposition.

\begin{lem}[\rgate-$R_x$ decomposition]
\label{Rxr}
Given a $2\times2$ unitary matrix, $U$, there exist reals $\alpha, \theta, \phi
$ and $\delta$ such that
\begin{equation} 
    U=e^{\I\alpha}\rgate(\theta, \phi)R_{x}(\delta).
\end{equation}
\end{lem}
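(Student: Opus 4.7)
The plan is to reduce to $SU(2)$ by factoring out a global phase and then characterize $\rgate(\theta,\phi)$ via its Pauli decomposition. First I would write $U = e^{\I\alpha}V$ with $V\in SU(2)$, and put $V$ in axis--angle form: $V = a_0 I + \I(a_1 X + a_2 Y + a_3 Z)$ with real coefficients satisfying $\sum_j a_j^2 = 1$, where $X,Y,Z$ are the Pauli matrices. The role of $\alpha$ is now fixed; the content of the lemma is producing the two-rotation factorization of $V$.

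The key observation is that $\rgate(\theta,\phi)$ is precisely a rotation around an axis in the $xy$-plane of the Bloch sphere: expanding in the Pauli basis one has
\begin{equation}
\rgate(\theta,\phi) = \cos(\theta/2)\,I - \I\sin(\theta/2)\bigl(\cos\phi\, X + \sin\phi\, Y\bigr),
\end{equation}
and, conversely, any $SU(2)$ element whose $Z$-coefficient vanishes has this form, with $\theta$ recoverable from the $I$-coefficient and $\phi$ from the ratio of the $X$- and $Y$-coefficients. It therefore suffices to find $\delta\in\mathbb R$ such that $V\,R_x(-\delta)$ has zero $Z$-component, because we may then set $\rgate(\theta,\phi) := V\,R_x(-\delta)$ to obtain $U = e^{\I\alpha}\rgate(\theta,\phi)R_x(\delta)$.

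Next I would multiply out $V\,R_x(-\delta)$ using $R_x(-\delta) = \cos(\delta/2) I - \I\sin(\delta/2) X$ together with the elementary Pauli identities $X^2 = I$, $YX = -\I Z$, and $ZX = \I Y$. A short computation yields that the $Z$-coefficient of the product is $a_3\cos(\delta/2) - a_2\sin(\delta/2)$. Setting this equal to zero gives a single trigonometric equation in $\delta$ which is solvable whenever $(a_2,a_3)\neq(0,0)$, for instance via $\delta = 2\arctan(a_3/a_2)$ (or $\delta = \pi$ if $a_2 = 0$).

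The main obstacle is the case analysis in the degenerate situations where the equation for $\delta$ either degenerates or becomes vacuous. If $(a_2,a_3) = (0,0)$, the axis of $V$ already lies along $\pm\hat{x}$, so $V$ is itself an $R_x$ rotation and we may take $\theta = 0$ with any $\phi$. If $V = \pm I$, the equation holds for every $\delta$; the sign is absorbed either into $\alpha$ or by choosing $\delta\in\{0,2\pi\}$. Once these cases are dispatched and a valid $\delta$ is fixed, reading off $\theta$ and $\phi$ from the $I$-, $X$-, and $Y$-coefficients of $V\,R_x(-\delta)$ (which automatically satisfy the $SU(2)$ norm constraint) produces the claimed decomposition.
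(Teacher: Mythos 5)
Your proof is correct, but it takes a genuinely different route from the paper's. The paper first invokes the (generalized) Lemma~\ref{ZYZ} to write $U=e^{\I\alpha}R_x(\beta)R_y(\gamma)R_x(\tilde\delta)$ and then appeals to the circuit equivalence (12) of~\cite{Maslov_trapped_ions}, which supplies reals $\theta,\phi$ with $R_x(\beta)R_y(\gamma)=\rgate(\theta,\phi)R_x(-\beta)$, so the two trailing $R_x$ factors merge into a single $R_x(\delta)$ with $\delta=\tilde\delta-\beta$. You instead work directly in the Pauli (quaternion) parametrization of $SU(2)$: you characterize $\rgate(\theta,\phi)=\cos(\theta/2)\,I-\I\sin(\theta/2)(\cos\phi\,X+\sin\phi\,Y)$ as exactly the special-unitaries with vanishing $Z$-coefficient, compute that the $Z$-coefficient of $V R_x(-\delta)$ is $a_3\cos(\delta/2)-a_2\sin(\delta/2)$, and solve the resulting trigonometric equation for $\delta$ (your sign conventions match the paper's $R_x(\theta)=\cos(\theta/2)I+\I\sin(\theta/2)X$, and your handling of the degenerate cases $(a_2,a_3)=(0,0)$ and $V=\pm I$, using $\theta=2\pi$ or $\delta=2\pi$ as needed, closes the converse direction). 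The paper's argument is shorter but leans on Lemma~\ref{ZYZ} and an identity imported from~\cite{Maslov_trapped_ions}; yours is self-contained and constructive, yielding explicit formulas for $\delta$, $\theta$ and $\phi$ that could be used directly in an implementation, at the cost of a somewhat longer elementary computation and case analysis.
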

\begin{proof}
    From (the generalized) Lemma~\ref{ZYZ}, it follows that there exist reals $\alpha, \beta, \gamma, \tilde\delta$,
    such that
    \begin{equation} 
        U=e^{\I\alpha}R_{x}(\beta)R_y(\gamma)R_x(\tilde\delta).
    \end{equation}
    The circuit equivalence (12) in~\cite{Maslov_trapped_ions} implies that there exists reals
    $\theta, \phi$ such that $\rgate(\theta,\phi) R_x(-\beta)R_y(-\gamma)R_x(-\beta) =
    \id$, or, equivalently, $R_x(\beta) R_y(\gamma)=\rgate(\theta,\phi) R_x(-\beta)$.
    It follows that
    \begin{align}
        U&=e^{\I\alpha}R_x(\beta)R_y(\gamma)R_x(\tilde\delta)\\
         &=e^{\I\alpha}\rgate(\theta,\phi) R_x(-\beta) R_x(\tilde\delta) \\
         &=e^{\I\alpha}\rgate(\theta,\phi) R_x(\delta) \, ,
    \end{align}
    where $\delta := \tilde\delta - \beta$.
  \end{proof}
  
This leads to the circuit equivalence
\[
\Qcircuit @C=0.8em @R=.46em {
    &\multigate{2}{\text{XX}(\phi)}&\gate{U}&\qw&&&&&\gate{R_x}&\multigate{2}{\text{XX}(\phi)}&\gate{\rgate}&\qw\\
    &&&&&=&&& \\
    &\ghost{\text{XX}(\phi)} &\gate{U} &\qw&&&&&\gate{R_x}&\ghost{\text{XX}(\phi)}&\gate{\rgate}&\qw \\
}
\]
which we can apply recursively starting at the end of the circuit as
follows.  We first merge all the single-qubit rotations into
single-qubit unitaries before applying the above circuit equivalence at
the last \xx gate appearing in the circuit and merge the $R_x$ gates
into the proceeding single-qubit unitary.  We then apply the circuit
equivalence to the second to last \xx gate in the circuit, and so
on. The single-qubit unitaries that remain at the end can be
written in terms of $\rgate$-gates.

We provide the commands
\com{CNOTRotationsToXXRGates} and \com{XXRGatesToCNOTRotations} to
convert between circuits that use single qubit
rotations and \cnots{} and those using \xx and
$\rgate$-gates.

\section{Classical time complexity for the decomposition of isometries} \label{app:run_time}
 In this section we give some details about how to find the classical
 time complexity for the different decomposition schemes for
 isometries. Note that these complexities refer to numerical cases
 (not analytic calculations).
 
\subsection{Classical complexity for the column-by-column decomposition} 
To leading order, we only have to consider the decomposition and
simulation (i.e., the application to quantum states) of the uniformly
controlled gates. The decomposition scheme for a uniformly controlled
single-qubit gate with $k$ controls introduced in~\cite{10} has time
complexity $\mathcal{O}(k2^k)$, and computing the updated state after
its application to an $n$-qubit state has time complexity
$\mathcal{O}(2^{n})$ (one has to update all the entries of the state
vector in general). Hence, to update all $2^m$ columns of an isometry
from $m$ to $n$ qubits has time complexity
$\mathcal{O}(2^{m+n})$.\footnote{Decomposing the $l^{\text{th}}$ column with the column-by-colum decomposition, the columns $1,\dots,l-1$ are in the states $e^{i \phi_0}\ket{0},\dots, e^{i \phi_{l-2}} \ket{l-2}$ for some real phases $\phi_0,\dots,\phi_{l-2}$, and hence the application of uniformly controlled gates on these columns has constant time complexity. We ignore this here, since it does not change the overall time complexity of the column-by-column decomposition.}  Note that it is
straightforward to parallelize the application of a uniformly
controlled gate to the different columns of an isometry (and also the application to a single-column), which can speed up practical
implementations significantly (but we do not take this into account
here for the complexity measure). The complexity of decomposing one
column is
\begin{align*}
\sum_{k=0}^{n-1} \mathcal{O}(k2^k+2^{m+n})=\mathcal{O}(n2^{m+n})
\end{align*}
Hence, the complexity to decompose all of the $2^m$ columns is $\mathcal{O}(n2^{2m+n})$. 

\subsection{Classical complexity for the Quantum Shannon Decomposition} \label{app:classical_comp_QSD}
The Quantum Shannon Decomposition of an isometry from $m$ to $n$ qubits is based on the Cosine-Sine-Decomposition of an unitary expansion of the isometry~\cite{2,Iso}. Since the unitary expansion is a matrix of dimension $2^n \times 2^n$, the time complexity to perform the Cosine-Sine-Decomposition of it is  $\mathcal{O}(2^{3n})$~\cite{CSD}. 

\subsection{Classical Complexity for Knill's decomposition} \label{app:classical_comp_Knill}

Knill's decomposition of an isometry from $m$ to $n$ qubits requires running several subroutines from linear algebra to find the unitary $U$ from Lemma~\ref{lemma_existence_U} and its eigenvalue decomposition, which is required for the decomposition (see~\cite{Knill,Iso} and Appendix~\ref{app:Knill} for the details). The most time consuming operations are:
\begin{itemize}
\item Finding an orthonormal basis of the null space of $V^\dagger-I_{2^m,2^n}$ of dimension $2^m\times 2^n$ in the proof of  Lemma~\ref{lemma_existence_U},
\item Multiplying the matrices $W$ and $W''$ of dimension $(2^n-2^m)\times(2^n-2^m)$ in the proof of  Lemma~\ref{lem:Kn1}, 
\item Finding the eigenvalues and eigenvectors of a $2^n \times 2^n$ unitary $U$ in Lemma~\ref{lemma_Knill_dec}.
\end{itemize}
All of these operations can be implemented with time complexity $\mathcal{O}(2^{3n})$.\\

The time complexity of the remaining part of Knill's decomposition is dominated by the decomposition of the state preparation operations, which are denoted by $V_i$ in Lemma~\ref{lemma_Knill_dec}. By Appendix~\ref{app:classical_comp_SP}, state preparation (using the decomposition scheme introduced in~\cite{3}) has time complexity $\mathcal{O}(2^{3n/2})$. For Knill's decomposition, we have to perform this decomposition $2^{m+1}$ times, and hence the ``decomposition-phase'' of Knill's scheme has time complexity $\mathcal{O}(2^{m+3n/2})$.\\

We conclude that the whole decomposition has time complexity $\mathcal{O}(2^{3n}+2^{m+3n/2})=\mathcal{O}(2^{3n})$ (since $m\leq n$).

\subsection{Classical complexity for state preparation} \label{app:classical_comp_SP}

The method introduce in~\cite{3} requires calculating the Schmidt
decomposition of the given state on $n$ qubits as well as decomposing
two unitaries on each half of the qubits. To calculate the Schmidt
decomposition, one performs the singular value decomposition on a
matrix of dimension
$2^{ \lfloor n/2 \rfloor} \times 2^{\lceil n/2 \rceil}$, which has
complexity $\mathcal{O}(2^{3n/2})$~\cite{GolubvanLoan}. Decomposing
the unitaries can also be done with complexity $\mathcal{O}(2^{3n/2})$
(see Appendix~\ref{app:classical_comp_QSD}). We conclude that state
preparation as done in~\cite{3} has classical time complexity
$\mathcal{O}(2^{3n/2})$.

\section{Theoretical details required for the implementation of Knill's decomposition} \label{app:Knill}

In this appendix, we give some details about Knill's decomposition
introduced in~\cite{Knill} that are required for its implementation.
To help keep track of dimensions, throughout this section we will use
$\{\ket{i}_D\}_{i=0}^{D-1}$ to denote an orthonormal basis for
$\mathbb{C}^D$ for any $D\in\mathbb{N}$.

\begin{lem} \label{lemma_Knill_dec}
  Let $U$ be an $\n\times\n$ unitary matrix with eigendecomposition
  $U=\id+\sum_{i=0}^{t-1}(e^{i\theta_i}-1)\proj{v_i}$, with $\{\ket{v_i}\}$
  orthonormal, $e^{i\theta_i}\neq1$ and $t\geq1$ (i.e., $U$ has $t$ eigenvalues that
  differ from $1$). Then $U=\prod_{i=0}^{t-1}V_iP_iV_i^\dagger$, where
  $V_i$ is any unitary that takes $\ket{0}$ to $\ket{v_i}$ and
  $P_i=\id+(e^{i\theta_i}-1)\proj{0}$.
\end{lem}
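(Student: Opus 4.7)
The plan is to unpack the product $\prod_i V_i P_i V_i^\dagger$ using the hypothesis $V_i \ket{0} = \ket{v_i}$, and then use orthonormality of $\{\ket{v_i}\}$ to show the factors commute and combine cleanly into the stated eigendecomposition.

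First I would show that each factor satisfies $V_i P_i V_i^\dagger = \id + (e^{i\theta_i} - 1)\proj{v_i}$. This is immediate: $V_i \id V_i^\dagger = \id$, and $V_i \proj{0} V_i^\dagger = \proj{v_i}$ because $V_i \ket{0} = \ket{v_i}$ (and hence $\bra{0}V_i^\dagger = \bra{v_i}$). Call this operator $A_i$; note each $A_i$ is itself unitary (it acts as $e^{i\theta_i}$ on $\ket{v_i}$ and as the identity on $\ket{v_i}^\perp$).

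Next I would verify that the $A_i$ pairwise commute. For $i \neq j$, orthonormality gives $\braket{v_i}{v_j} = 0$, so
\begin{equation*}
\proj{v_i}\proj{v_j} = \ket{v_i}\braket{v_i}{v_j}\bra{v_j} = 0,
\end{equation*}
and similarly $\proj{v_j}\proj{v_i} = 0$. Therefore
\begin{equation*}
A_i A_j = \id + (e^{i\theta_i}-1)\proj{v_i} + (e^{i\theta_j}-1)\proj{v_j} = A_j A_i.
\end{equation*}

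Finally I would prove by induction on $t$ that
\begin{equation*}
\prod_{i=0}^{t-1} A_i = \id + \sum_{i=0}^{t-1}(e^{i\theta_i}-1)\proj{v_i},
\end{equation*}
which matches the given eigendecomposition of $U$. The base case $t=1$ is trivial. For the inductive step, multiplying the inductive hypothesis by $A_{t-1}$ on the right produces one identity term, the sum $\sum_{i<t-1}(e^{i\theta_i}-1)\proj{v_i}$ (from the cross products with $\id$), the term $(e^{i\theta_{t-1}}-1)\proj{v_{t-1}}$, and finally a cross product $\sum_{i<t-1}(e^{i\theta_i}-1)(e^{i\theta_{t-1}}-1)\proj{v_i}\proj{v_{t-1}}$ which vanishes by the orthogonality argument above. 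Collecting terms gives the desired expression and completes the proof. There is no real obstacle here; the only thing to be careful about is ensuring the orthogonality-based cancellation of the cross terms, which is what makes the factorization work.
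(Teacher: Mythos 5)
Your proposal is correct and follows essentially the same route as the paper: show each conjugated factor equals $\id+(e^{\I\theta_i}-1)\proj{v_i}$ and then multiply the factors together. In fact your direct observation that $V_i\proj{0}V_i^\dagger=\proj{v_i}$ is a little slicker than the paper's expansion $V_i=\ketbra{v_i}{0}+R_i$ with $R_iR_i^\dagger=\id-\proj{v_i}$, and your induction with the orthogonality-based cancellation of cross terms spells out the final step that the paper compresses into ``the product is hence $U$''.
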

\begin{proof}
  Write $V_i=\ketbra{v_i}{0}+R_i$, where
  $R_i=\sum_{j=1}^{t-1}\ketbra{r^i_j}{j}$ so that $R_i\ket{0}=0$.  In
  order that $V_i$ is unitary, we require
  $R_iR_i^\dagger=\id-\proj{v_i}$.  Then,
\begin{eqnarray*}
V_iP_iV_i^\dagger&=&(\ketbra{v_i}{0}+R_i)(\id+(e^{i\theta_i}-1)\proj{0})(\ketbra{0}{v_i}+R^\dagger_i)\\
&=&e^{i\theta_i}\proj{v_i}+R_iR_i^\dagger\\
&=&\id+(e^{i\theta_i}-1)\proj{v_i}\, .
\end{eqnarray*}
The product is hence $U$.
\end{proof}

We now show (along the lines of~\cite{Knill}) that any $\n\times\m$
isometry can be extended to a unitary with at most $\m$ eigenvalues
that differ from $1$.

\begin{lem}\label{lem:Kn1}
  Let $X$ and $Y$ be $\n\times\m$ matrices such that $X^\dagger
  X=Y^\dagger Y$. Then there exists an $\n\times\n$ unitary $U$ such
  that $UX=Y$.
\end{lem}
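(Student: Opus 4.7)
The plan is to construct $U$ first as a partial isometry defined on $\mathrm{range}(X)$ and then extend it to all of $\mathbb{C}^N$. The hypothesis $X^\dagger X = Y^\dagger Y$ naturally forces $X$ and $Y$ to have the same Gram matrix on the domain, which is precisely the information we need.

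First, I would define a map $U_0 : \mathrm{range}(X) \to \mathrm{range}(Y)$ by the rule $U_0(Xv) = Yv$ for every $v \in \mathbb{C}^M$. Well-definedness is the first thing to check: if $Xv_1 = Xv_2$, then setting $w = v_1 - v_2$ we have $w^\dagger X^\dagger X w = 0$, and by hypothesis this equals $w^\dagger Y^\dagger Y w = \|Yw\|^2$, so $Yv_1 = Yv_2$. Linearity of $U_0$ is immediate from the formula, and the same calculation shows $\|U_0(Xv)\|^2 = v^\dagger Y^\dagger Y v = v^\dagger X^\dagger X v = \|Xv\|^2$, so $U_0$ is an isometry.

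Next, I would use the equality of Gram matrices to match the dimensions of the orthogonal complements. Since $\mathrm{rank}(X) = \mathrm{rank}(X^\dagger X) = \mathrm{rank}(Y^\dagger Y) = \mathrm{rank}(Y)$, the subspaces $\mathrm{range}(X)^\perp$ and $\mathrm{range}(Y)^\perp$ of $\mathbb{C}^N$ have a common dimension, say $d = N - \mathrm{rank}(X)$. Pick any orthonormal bases $\{a_1,\dots,a_d\}$ and $\{b_1,\dots,b_d\}$ of these two complements, and extend $U_0$ by setting $U a_i = b_i$. The resulting $U$ sends an orthonormal basis of $\mathbb{C}^N$ (obtained by concatenating an orthonormal basis of $\mathrm{range}(X)$ with the $a_i$) to an orthonormal set, so it is unitary, and by construction $UX = Y$.

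The only subtle point is the well-definedness step above; everything else is standard linear algebra. An equivalent approach, which I would mention as an alternative, is to take singular value decompositions $X = U_X \Sigma V_X^\dagger$ and $Y = U_Y \Sigma' V_Y^\dagger$, note that $X^\dagger X = Y^\dagger Y$ forces the singular values and (up to a unitary on each eigenspace) the right singular vectors to coincide, and then read off $U = U_Y U_X^\dagger$ on the column space and extend arbitrarily on its complement; but the direct construction above is both shorter and more transparent, so I would present it as the main argument.
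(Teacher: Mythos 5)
Your proof is correct, but it takes a genuinely different route from the paper's. You build the partial isometry abstractly: define $U_0(Xv)=Yv$ on $\mathrm{range}(X)$, with well-definedness and isometry both read off from $X^\dagger X=Y^\dagger Y$ (indeed $\braket{Xv}{Xw}=\braket{Yv}{Yw}$ holds directly, so the inner-product preservation you need to conclude that an orthonormal basis maps to an orthonormal set is immediate, with no appeal to polarization required), then match the orthogonal complements via $\mathrm{rank}(X)=\mathrm{rank}(X^\dagger X)=\mathrm{rank}(Y^\dagger Y)=\mathrm{rank}(Y)$ and extend by any unitary identification of those complements. The paper instead argues through the singular value decomposition $Y=W\Sigma V$: it shows the columns $\ket{v_i}$ of $XV^\dagger$ satisfy $\braket{v_i}{v_j}=\sigma_i^2\delta_{ij}$, forms the partial isometry $W'=\sum_{i:\sigma_i\neq0}\sigma_i^{-1}\ket{i}_\n\!\bra{v_i}$ so that $W'XV^\dagger=\Sigma$, extends it to a unitary $W''$, and sets $U=WW''$ --- essentially the SVD alternative you sketch at the end. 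Both arguments hinge on the same fact (equality of Gram matrices), but yours is shorter, coordinate-free, and handles $\m>\n$ uniformly, whereas the paper's proof treats $\m\leq\n$ and defers the other case with ``can be treated similarly.'' What the paper's explicit matrix construction buys is that it mirrors what the package actually computes, which is why the cost of forming $W''$ and multiplying it with $W$ appears in the run-time analysis of Knill's decomposition in Appendix~\ref{app:classical_comp_Knill}; your argument, while cleaner as mathematics, would still need to be turned into such a concrete procedure to serve that purpose.
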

\begin{proof}
  Let $\m\leq\n$, and $Y=W\Sigma V$ be the SVD of $Y$, with
  $\Sigma=\sum_{i=1}^\m\sigma_i \ket{i}_{\n}\bra{i}_\m$ where
  $\{\sigma_i\}$ are non-negative real numbers. Note that $W$ is
  $\n\times\n$, $V$ is $\m\times\m$ and $\Sigma$ is $\n\times\m$.  

  We have $Y^\dagger Y=V^\dagger \Sigma^\dagger\Sigma V$.  Thus
  $VX^\dagger XV^\dagger=\Sigma^\dagger
  \Sigma=\sum_i\sigma_i^2\proj{i}_\m$. Let
  $XV^\dagger=\sum_i\ketbra{v_i}{i}_\m$, for some (non-normalized)
  vectors $\{v_i\}_{i=1}^\m$; $v_i\in\mathbb{C}^\n$. Then
  $VX^\dagger
  XV^\dagger=\sum_{i,j}\braket{v_i}{v_j}\ket{i}_\m\!\bra{j}_\m$, hence
  $\braket{v_i}{v_j}=\sigma_i^2\delta_{ij}$.  We hence define the
  $\n\times\n$ matrix
  $W'=\sum_{i:\sigma_i\neq0}\sigma_i^{-1}\ket{i}_\n\!\bra{v_i}$, so
  that $W'XV^\dagger=\Sigma$. Note that
  $W'(W')^\dagger=\sum_{i:\sigma_i\neq0}\ket{i}_\n\!\bra{i}_\n$, and
  that $W'$ can be extended to a unitary $W''$ without affecting its
  action on $XV^\dagger$. Then, if we take $U=WW''$, we have
  $UX=WW''X=W\Sigma V=Y$.  The case $\m>\n$ can be treated similarly.
\end{proof}

\begin{lem} \label{lemma_existence_U}
  Let $\n$ and $\m\leq\n$ be positive integers and $V$ be an $\n\times\m$ matrix
  satisfying $V^\dagger V=\id_\m$, i.e., $V$ is an isometry. There exists an $\n\times\n$ unitary matrix $U$
  such that $U\ket{i}_\n=V\ket{i}_\m$ for $i\in\{0,\ldots,\m-1\}$, and
  which has at least $\n-\m$ eigenvalues equal to 1.
\end{lem}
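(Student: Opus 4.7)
My plan is to prove the lemma by induction on $m$, peeling off one column of $V$ at a time and using a single rank-one perturbation of the identity at each step so that only one additional eigenvalue is moved off $1$. The base case $m = 0$ is trivial, since $U = I$ satisfies the column condition vacuously and has all $n$ eigenvalues equal to $1$. In the inductive step I will split the task into (i) building a single-column ``mover'' $U_0$, (ii) applying the inductive hypothesis to a residual isometry, and (iii) combining the two pieces.

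For part (i), I would look for $U_0$ of the form $I + c\proj{w_0}$ with $U_0\ket{0}_n = V\ket{0}_m$, setting $\ket{w_0}$ parallel to $V\ket{0}_m - \ket{0}_n$ (and taking $U_0 = I$ in the degenerate case $V\ket{0}_m = \ket{0}_n$). A short computation determines the scalar $c$ uniquely from $U_0\ket{0}_n = V\ket{0}_m$, and the central algebraic check --- which I expect to be the main obstacle --- is that this $c$ automatically satisfies $c + \bar c + |c|^2 = 0$, the unitarity condition for a rank-one perturbation. The cleanest way to see this is to combine the defining equation for $c$ with the identity $\|V\ket{0}_m - \ket{0}_n\|^2 = 2 - 2\,\mathrm{Re}\langle 0|V|0\rangle$. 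Once this is verified, $U_0$ is unitary with $+1$-eigenspace equal to $\ket{w_0}^\perp$ (dimension $n-1$).

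For part (ii), I would pass to the residual isometry $V' = U_0^\dagger V$, whose first column is $\ket{0}_n$ and whose remaining $m-1$ columns lie in $(\ket{0}_n)^\perp$ because columns of an isometry are orthonormal. Viewing those columns as an isometry from $\mathbb{C}^{m-1}$ to $(\ket{0}_n)^\perp \cong \mathbb{C}^{n-1}$, the inductive hypothesis produces a unitary $U''$ on the $(n-1)$-dimensional subspace with at least $(n-1)-(m-1)=n-m$ $+1$-eigenvalues; extending by $\ket{0}_n \mapsto \ket{0}_n$ yields a unitary $\tilde U$ on $\mathbb{C}^n$ with $\dim\mathrm{Fix}(\tilde U) \geq n-m+1$. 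For part (iii) I would take $U = U_0\tilde U$, verify the column conditions (one-line checks: $U\ket{0}_n = \ket{v_0}$ and $U\ket{i}_n = U_0 U_0^\dagger\ket{v_i} = \ket{v_i}$ for $i\geq 1$), and bound fixed directions by noting that any $\vec w \in \mathrm{Fix}(\tilde U) \cap \ket{w_0}^\perp$ is fixed by $U$ since $U_0$ is the identity on $\ket{w_0}^\perp$. The dimension formula then yields $\dim\mathrm{Fix}(U) \geq (n-m+1)+(n-1)-n = n-m$, closing the induction.
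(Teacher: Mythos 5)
Your proposal is correct, but it takes a genuinely different route from the paper's proof. The paper argues non-inductively: it views the missing $\n-\m$ rows $W$ of $U^\dagger$ as the free data, picks an orthonormal basis of the nullspace of $V^\dagger-I_{\m,\n}$ (dimension at least $\n-\m$, consisting of vectors fixed by $V^\dagger$), assembles these into a matrix $X$, and then invokes the SVD-based Lemma~\ref{lem:Kn1} to choose $W$ with $WX=X_2$, so that $U^\dagger X=X$ and the columns of $X$ are explicit unit eigenvectors. You instead induct on the number of columns, absorbing one column per step with a rank-one unitary perturbation $\id+c\proj{w_0}$, $\ket{w_0}$ parallel to $V\ket{0}-\ket{0}$; your central check does go through: with $z=\bra{0}V\ket{0}$ one gets $c=(2-z-\bar z)/(\bar z-1)$, hence $1+c=(1-z)/(\bar z -1)$, which is unimodular, so $U_0=\id+(e^{\I\theta}-1)\proj{w_0}$ is unitary, and your fixed-space count $\dim\mathrm{Fix}(U)\geq(\n-\m+1)+(\n-1)-\n=\n-\m$ suffices because a unitary is diagonalizable, so the geometric multiplicity of the eigenvalue $1$ is its multiplicity. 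What each approach buys: yours is elementary and self-contained (no SVD or Lemma~\ref{lem:Kn1}), and it is constructive in a way that dovetails with the rest of the paper — each factor $\id+(e^{\I\theta}-1)\proj{w_0}$ is exactly of the form $V_iP_iV_i^\dagger$ appearing in Lemma~\ref{lemma_Knill_dec} and in the Householder-style decompositions of Section~\ref{sec:house}, so your induction simultaneously re-derives the fact that the extension is a product of $\m$ such one-dimensional phase reflections; the paper's proof, on the other hand, produces the unitary in one shot together with an orthonormal family of eigenvalue-$1$ eigenvectors, and it is the construction actually implemented and costed (the nullspace computation for $V^\dagger-I_{\m,\n}$ and the completion of $W'$ to $W''$ are precisely the dominant steps listed in Appendix~\ref{app:classical_comp_Knill}). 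If you write yours up, make explicit the degenerate case $V\ket{0}=\ket{0}$ (which you noted) and the identification of $\ket{0}^\perp$ with $\mathbb{C}^{\n-1}$ when applying the inductive hypothesis to $U_0^\dagger V$; both are routine.
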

\begin{proof}
  First note that $V$ can be written in terms of its columns
  $\ket{v_i}\in\mathbb{C}^\n$ via
  $V=\sum_{i=0}^{\m-1}\ketbra{v_i}{i}_\m$. We can then find $\n-\m$
  further vectors $\ket{v_i}\in\mathbb{C}^\n$ numbered from $i=\m$ to
  $\n-1$ such that $\{\ket{v_i}\}_{i=0}^{\n-1}$ is an orthonormal basis
  for $\mathbb{C}^\n$.  The matrix
  $\sum_{i=0}^{\n-1}\ketbra{v_i}{i}_\n=:\tilde{U}$ is then a unitary
  satisfying $\tilde{U}\ket{i}_\n=V\ket{i}_\m$ for
  $i\in\{0,\ldots,\m-1\}$.  We need to show that it is always possible
  to choose the set $\{\ket{v_i}\}_{i=\m}^{\n-1}$ such that $U$ has at
  least $\n-\m$ eigenvalues equal to 1.  Note that this is equivalent to
  $U^\dagger$ having at least $\n-\m$ eigenvalues equal to 1.

  Let us write
  $U^\dagger=\left(\begin{array}{c}V^\dagger\\W\end{array}\right)$,
  where $W$ is $(\n-\m)\times\n$ so that for any $\n\times K$ matrix
  $X$ for some positive integer $K$,
  $U^\dagger X=\left(\begin{array}{c}V^\dagger
      X\\WX\end{array}\right)$.  Note that, by unitarity,
  $VV^\dagger+W^\dagger W=\id_\n$.

  $V^\dagger-I_{\m,\n}$ has dimension $\m\times\n$ and hence its
  nullspace dimension $q$ is at least $\n-\m$ (here $I_{\m,\n}$ denotes
  the $\m\times\n$ matrix
  $I_{\m,\n}=\sum_{i=0}^{\m-1}\ket{i}_\m\!\bra{i}_\n$). Let us take
  $\ket{f_i}\in\mathbb{C}^\n$ to be an orthonormal basis for this
  nullspace for $i\in\{0,\ldots,q-1\}$ so that
  $V^\dagger\ket{f_i}=\ket{f_i}$ for $i\in\{0,\ldots,q-1\}$.

  Consider now the $\n\times(\n-\m)$ matrix
  $X=\sum_{i=0}^{q-1}\ketbra{f_i}{i}_{\n-\m}$.  We can rewrite $X$ in
  terms of its rows as $X=\sum_{i=1}^\n\ket{i}_\n\!\bra{x_i}$, where
  $\ket{x_i}\in\mathbb{C}^{\n-\m}$ and divide this into $X_1$ and
  $X_2$, where $X_1$ comprises the first $\m$ rows, and $X_2$ the
  remaining $\n-\m$ rows (e.g.,
  $X_1=\sum_{i=0}^{\m-1}\ket{i}_\n\!\bra{x_i}\,$).  By construction,
  $V^\dagger X=I_{\m,\n} X=X_1$ and hence
  $X_1^\dagger X_1=X^\dagger VV^\dagger X$.  Furthermore,
  $\id_{\n-\m}=X^\dagger X=X_1^\dagger X_1+X_2^\dagger X_2$ and hence
  $X^\dagger W^\dagger WX=X^\dagger(\id-VV^\dagger)X=X^\dagger
  X-X_1^\dagger X_1=X_2^\dagger X_2$.

  Since there is unitary freedom in $W$, it follows from
  Lemma~\ref{lem:Kn1} that it can be chosen such that $WX=X_2$.  With
  this choice, $U^\dagger X=X$, and hence the $\n-\m$ columns of $X$ are
  eigenvectors of $U^\dagger$ with eigenvalue 1.
\end{proof}

%

\end{document}